\let\NAT@parse\undefined
\newtheorem{assumption}{Assumption}
\newtheorem{definition}{Definition}
\newtheorem{theorem}{Theorem}
\newtheorem{lemma}{Lemma}
\newtheorem{corollary}{Corollary}
\newtheorem{remark}{Remark}
\def\doi{10.1109/LCSYS.2022.3186236}
\begin{document}
\title{A simple suboptimal moving horizon estimation scheme with guaranteed robust stability}
\author{
	Julian D. Schiller, Boyang Wu, and Matthias A. Müller
	\thanks{This work was supported by the German Research Foundation (DFG) under Grant MU 3929/2-1. \textit{(Corresponding author: Julian D. Schiller.)}}
	\thanks{The authors are with the Leibniz University Hannover, Institute of Automatic Control, 30167 Hannover, Germany (e-mail: schiller@irt.uni-hannover.de; mueller@irt.uni-hannover.de).}
}

\maketitle
\thispagestyle{empty}
\thispagestyle{copyright}

\begin{abstract}
We propose a suboptimal moving horizon estimation (MHE) scheme for a general class of nonlinear systems.
To this end, we consider an MHE formulation that optimizes over the trajectory of a robustly stable observer.
Assuming that the observer admits a Lyapunov function, we show that this function is an $\boldsymbol{M}$-step Lyapunov function for suboptimal MHE.
The presented sufficient conditions can be easily verified in practice.
We illustrate the practicability of the proposed suboptimal MHE scheme with~a~standard nonlinear benchmark example.
Here, performing a single iteration is sufficient to significantly improve the observer's estimation results under valid theoretical guarantees.
\end{abstract}

\begin{IEEEkeywords}
	Moving horizon estimation (MHE), nonlinear systems, stability, state estimation.
\end{IEEEkeywords}

\section{Introduction}
\IEEEPARstart{S}{tate} estimation is crucial for many control applications and hence of high practical relevance.
However, designing suitable estimators is a challenging problem for nonlinear systems, especially in case of noisy measurements and model inaccuracies.
To this end, moving horizon estimation (MHE) has proven to be a powerful solution and constitutes an active area of research with recent results providing sufficient conditions for robust stability, see, e.g., \cite{Rawlings2017,Schiller2022,Knuefer2021,Allan2021a}.
Since MHE is an optimization-based approach, it is usually computationally demanding; however, the computing power available in practice is often severely limited.
Accordingly, computing the global optimum at each time step is often not possible in practice, which, however, renders the theoretical guarantees invalid (since they crucially depend on this criterion).

\subsubsection*{Related Work}
To simplify the optimization problem, a pre-estimating MHE scheme for linear systems was proposed in \cite{Sui2010} that utilizes an additional \textit{auxiliary observer} to replace the system dynamics as MHE constraint.
Since no optimal disturbance sequence has to be computed, the number of optimization variables could be significantly reduced.
This idea was transferred to a class of nonlinear systems in~\cite{Suwantong2014}, and a major speed improvement compared to the standard MHE formulation could be shown.
However, this scheme relies on a uniform observability condition and requires optimal solutions to the MHE problem, the achievement of which can still hardly be guaranteed within fixed time intervals.
To overcome this, fast MHE methods were developed in, e.g.,~\cite{Kuehl2011,Wynn2014,Alessandri2017}, performing only a predetermined number of iterations of a certain optimization algorithm (e.g., gradient- or Newton-based).
Convergence guarantees could be established under observability conditions and using (local) contraction properties of the specific algorithms involved, cf. \cite{Wynn2014,Alessandri2017}.
In~\cite{Wan2017}, the combination of a fast MHE scheme and pre-estimation using a nonlinear Luenberger observer was considered.
A suboptimal proximity-MHE scheme for nonlinear systems was presented in \cite{Gharbi2021}, where nominal stability guarantees could be given without performing any optimization by employing a pre-stabilizing observer and a certain gradient-based optimization method.
In contrast, robust guarantees for suboptimal MHE were established in \cite{Schiller2022b} independent of the horizon length, the chosen optimization algorithm, and the number of iterations performed by using an observer-based candidate solution to the MHE problem.

\subsubsection*{Contribution}
We propose a simple suboptimal MHE scheme with least squares objective and exponential time-discounting. Such a discount factor has proven useful in recent works on nonlinear MHE, compare, e.g.,~\cite{Schiller2022,Knuefer2021,Schiller2022b}.
In contrast to those works, however, we optimize over the trajectory of a robustly stable auxiliary observer, similar to the idea proposed in \cite{Sui2010,Suwantong2014,Wan2017}, cf. Section~\ref{sec:mhe_formulation}.
Assuming that the observer admits a Lyapunov function, we show that this directly yields a novel $M$-step Lyapunov function for suboptimal MHE, independent of the chosen optimization algorithm and the number of iterations performed, cf.~Section~\ref{sec:MHE_stability}.
The stated sufficient condition on the horizon length~$M$ can be easily verified in practice.
We provide good tuning opportunities and consider modifications which allow for arbitrary horizon lengths.
Moreover, in contrast to \cite{Suwantong2014,Wan2017,Schiller2022b,Gharbi2021}, we show that the theoretical guarantees (both the decrease rate and disturbance gains) strictly improve with increasing horizon length and converge (for $M\rightarrow\infty$) to those of the auxiliary observer (which is the best possible bound given that we derive guarantees for an arbitrary number of iterations including zero).
We illustrate the applicability of the proposed suboptimal MHE scheme with a standard nonlinear MHE benchmark example, cf.~Section~\ref{sec:num}.
We verify the sufficient conditions and show that performing only a single iteration of the optimizer each time step is sufficient to significantly improve the estimation results from the auxiliary observer under valid theoretical~guarantees.

\subsubsection*{Notation}
The set of all integers in an interval $[a,b]\subset\mathbb{R}$ is denoted by $\mathbb{I}_{[a,b]}$ and the set of integers greater than or equal to $a$ by $\mathbb{I}_{\geq a}$.
Let $\|x\|$ denote the Euclidean norm of the vector $x \in \mathbb{R}^n$ and $\|x\|_A^2=x^\top A x$ for a positive definite matrix $A=A^\top$.
The minimal (maximal) eigenvalues of $A=A^\top$ are denoted by $\lambda_{\min}(A)$ ($\lambda_{\max}(A)$) and the maximum generalized eigenvalue w.r.t. a matrix $B=B^\top$ by $\lambda_{\max}(A,B)$; $I_n\in\mathbb{R}^{n\times n}$ represents the identity matrix. 

\section{Setup and Preliminaries}

We consider the discrete-time, nonlinear perturbed system
\begin{subequations}
	\label{eq:sys}
	\begin{align}
	\label{eq:sys_1}
	x_{t+1}&=f(x_t,u_t,w_t),\\
	\label{eq:sys_2}
	y_t&=h(x_t,u_t,v_t), 
	\end{align}
\end{subequations}
with state $x_t\in\mathbb{X}\subseteq\mathbb{R}^n$, control input $u_t\in\mathbb{U}\subseteq\mathbb{R}^m$, disturbances $w_t\in\mathbb{W}\subseteq\mathbb{R}^{q}$ and $v_t\in\mathbb{V}\subseteq\mathbb{R}^{r}$, noisy measurement $y_t\in\mathbb{Y}\subset\mathbb{R}^p$, and discrete time $t\in\mathbb{I}_{\geq 0}$.
The nonlinear continuous functions  $f:\mathbb{R}^n\times\mathbb{R}^m\times\mathbb{R}^q\rightarrow\mathbb{R}^n$ and  $h:\mathbb{R}^n\times\mathbb{R}^m\times\mathbb{R}^r\rightarrow\mathbb{R}^p$ constitute the system dynamics and output equation, respectively.
We assume that $0\in\mathbb{V}$ and denote the nominal (disturbance-free) system equations as $f_{\mathrm{n}}(x,u) = f(x,u,0)$ and $h_{\mathrm{n}}(x,u) = h(x,u,0)$.

Given some initial guess $\hat{x}_0$ of the true state $x_0$, the overall goal is, at any time $t\in\mathbb{I}_{\geq0}$, to provide an estimate $\hat{x}_t$ of the current state $x_t$ that satisfies the following stability notion. 

\begin{definition}[\protect{RGES \cite[Def. 1]{Knuefer2018}}]
	\label{def:RGES}
	A state estimator for system~\eqref{eq:sys} is robustly globally exponentially stable (RGES) if there exist $C_1,C_2,C_3>0$ and $\lambda_1,\lambda_2,\lambda_3\in [0,1)$ such that the resulting state estimate $\hat{x}_t$ satisfies
	\begin{align}	
	\label{eq:RGES}	
	\|x_{t}-\hat{x}_{t}&\| \leq \max \Big\{ C_1\lambda_1^t\|x_0-\hat{x}_0\|,\\	
	& \max_{j\in\mathbb{I}_{[0,t-1]}}C_2\lambda_2^{t-j-1}\|w_j\|, \max_{j\in\mathbb{I}_{[0,t-1]}}C_3\lambda_3^{t-j-1}\|v_j\|\Big\}\nonumber
	\end{align}
	for all $t\in\mathbb{I}_{\geq0}$, all initial conditions $x_0,\hat{x}_0\in\mathbb{X}$, and every trajectory $(x_t,u_t,w_t,v_t,y_t)_{t=0}^\infty$ satisfying~\eqref{eq:sys}.
\end{definition}

This corresponds to an exponential version of the robust stability property that is often used in the recent MHE literature, see, e.g., \cite{Rawlings2017,Knuefer2021,Allan2021a,Schiller2022}.

\begin{remark}\label{rem:sum_max}
	A state estimator is RGES as characterized in Definition~\ref{def:RGES} if and only if~\eqref{eq:RGES} holds with each maximization operation replaced by summation, cf. \cite[Prop.~3.13]{Allan2021a}.
\end{remark}

To establish RGES for the suboptimal MHE scheme presented in Section~\ref{sec:mhe_formulation} below, we require knowledge of an additional auxiliary observer.
To this end, we consider the following standard form given by a (possibly time-varying) observer mapping $g_t:\mathbb{Z} \times \mathbb{U} \times \mathbb{Y}\rightarrow \mathbb{Z}$ with $\mathbb{Z}\subseteq \mathbb{R}^n$ such that at any $t\in\mathbb{I}_{\geq0}$,
\begin{equation} \label{eq:obs}
z_{t+1}=g_t(z_t,u_t,y_t)
\end{equation}
is an estimate of the state $x_{t+1}$ of system~\eqref{eq:sys} using its current inputs and outputs $(u_t,y_t)$ and the estimate $z_t\in\mathbb{Z}$.
We assume that some observer in the form of~\eqref{eq:obs} is available that satisfies the following property.

\begin{assumption}[RGES observer] \label{ass:observer}
	There exists a $\delta$-Lya\-pu\-nov function $V_{\mathrm{o}} : \mathbb{Z}\times\mathbb{X}\rightarrow\mathbb{R}_{\geq0}$ and some $\eta \in [0,1)$, symmetric matrices $P_1,P_2\succ0$, and $Q,R\succeq0$ such that
	\begin{subequations}\label{eq:lyap}
		\begin{align}			
		&\ \|z-x\|_{P_1}^2 \leq V_\mathrm{o}(z,x) \leq \|z-x\|_{P_2}^2, \label{eq:lyap_1}\\
		&\ V_\mathrm{o}(g(z,u,h(x,u,v)),f(x,u,w)) \nonumber\\
		\leq &\ \eta V_{\mathrm{o}}(z,x)
		+\|w\|_Q^2 +\|v\|_R^2 \label{eq:lyap_2}
		\end{align}
	\end{subequations}
	for all $(z,x,u,w,v)\in\mathbb{Z}\times\mathbb{X}\times\mathbb{U}\times\mathbb{W}\times\mathbb{V}$.
\end{assumption}

Such a characterization of an RGES of observer was previously used in the context of MHE in \cite{Koehler2021}.
Overall, we consider a rather general class of observers in \eqref{eq:obs}, which represents an active area of research.
In particular, Assumption~\ref{ass:observer} can be verified with a quadratically bounded $\delta$-Lyapunov function $V_{\mathrm{o}}$ by employing the differential dynamics, cf. \cite{Sanfelice2016,Yi2021}.
Alternatively, we could restrict the design to a quadratic function $V_{\mathrm{o}}$, where sufficient conditions can be derived based on, e.g., incremental quadratic constraints \cite{Zhang2019} or specific Lipschitz properties \cite{Zemouche2013}; a quadratic (time-varying) function $V_{\mathrm{o}}$ arises for Kalman-like observers, cf. \cite{Reif1999,Jaganath2005}.
Note that Assumption~\ref{ass:observer} is our key assumption and can restrict the class of systems to which the proposed MHE scheme is applicable.

In Section~\ref{sec:MHE_stability}, we show that the $\delta$-Lyapunov function $V_{\mathrm{o}}$ from Assumption~\ref{ass:observer} serves as an $M$-step Lyapunov function for suboptimal MHE.
To this end, we additionally require the following continuity property of $h$~\eqref{eq:sys_2}.

\begin{assumption}[Lipschitz continuity of $h$] \label{ass:lipschitz} 
	The function $h$ is Lipschitz continuous, i.e., there exists some constant $L_{\mathrm{h}} > 0$ such that
	\begin{equation*}
	\|h(x,u,v)-h(\bar{x},\bar{u},\bar{v})\|\leq L_{\mathrm{h}}(\|x-\bar{x}\|+\|u-\bar{u}\|+\|v-\bar{v}\|)
	\end{equation*}
	for all $(x,\bar{x}) \in \mathbb{X}\times\mathbb{Z}$, $u,\bar{u} \in \mathbb{U}$, and $v,\bar{v} \in \mathbb{V}$.
\end{assumption} 

\section{Suboptimal Moving Horizon Estimation}\label{sec:mhe_formulation}
Given some finite estimation horizon $M\in\mathbb{I}_{\geq0}$, the proposed MHE scheme considers at each $t\in\mathbb{I}_{\geq0}$ the input and output data $(u,y)$ of system~\eqref{eq:sys} in a moving time window of length $M_t=\min\{t,M\}$.
The corresponding state estimate $\hat{x}_t$ is then obtained by solving the following nonlinear program (NLP)
\begin{subequations}\label{eq:MHE}
	\begin{align}\label{eq:MHE_IOSS_cost}
	\min_{\hat{x}_{t-M_t|t}}& J_t(\hat{x}_{t-M_t|t}) \\
	\text{s.t.} \ &\ \hat{x}_{j+1|t}= g_j(\hat{x}_{j|t},u_{j},y_{j}),\ j\in\mathbb{I}_{[t-M_t,t-1]}, \label{eq:MHE_1}\\
	&\ \hat{y}_{j|t}=h_{\mathrm{n}}(\hat{x}_{j|t},u_{j}),\ j\in\mathbb{I}_{[t-M_t,t-1]}, 	\label{eq:MHE_2}\\
	&\ \hat{x}_{j|t}\in\mathbb{Z},\ j\in\mathbb{I}_{[t-M_t,t]}, 	\label{eq:MHE_3}
	\end{align}
\end{subequations}
with the observer dynamics $g_t$~\eqref{eq:obs}.
Given the most recent input and output sequences $\{u_j\}_{j=t-M_t}^{t-1}$ and $\{y_j\}_{j=t-M_t}^{t-1}$ of system~\eqref{eq:sys}, the decision variable $\hat{x}_{t-M_t|t}$ (uniquely) defines a sequence of state estimates $\{\hat{x}_{j|t}\}_{j=t-M_t}^{t}$ and a sequence of output estimates $\{\hat{y}_{j|t}\}_{j=t-M_t}^{t-1}$ under \eqref{eq:MHE_1}-\eqref{eq:MHE_2}.
The (time-varying) cost function $J_t:\mathbb{R}^n\rightarrow\mathbb{R}_{\geq0}$ is defined as
\begin{align}
J_t(\hat{x}_{t-M_t|t}) =&\ 2\|\hat{x}_{t-M_t|t}-\hat{x}_{t-M_t}\|_{W}^2 \label{eq:MHE_objective} \\
&\ + \frac{\lambda_{\min}(P_1)}{2L_{\mathrm{h}}^2\lambda_{\max}(G)} \sum_{j=1}^{M_t}\eta^{j}\|\hat{y}_{t-j|t}-y_{t-j}\|_{G}^2, \nonumber
\end{align}
where $\hat{x}_{t-M_t}$ is the MHE estimate obtained $M_t$ steps in the past, $\eta$ and $P_1$ are from the $\delta$-Lyapunov function $V_{\mathrm{o}}$ (Assumption~\ref{ass:observer}), and $L_{\mathrm{h}}$ is from the Lipschitz property of~$h$ (Assumption~\ref{ass:lipschitz}).
The parameters $W,G\succeq0$ with $W,G\neq0$ are weighting matrices that can be tuned arbitrarily; their respective influence on the theoretical properties of the resulting estimator is discussed in detail in Remark~\ref{rem:param} below.

Note that in contrast to most of the literature on nonlinear MHE, the proposed scheme does not  optimize over a disturbance sequence $\{\hat{w}_{j|t}\}_{j=t-M_t}^t$ as is the case in, e.g., \cite{Schiller2022,Knuefer2021,Allan2021a,Schiller2022b,Rawlings2017}; 
instead, we directly employ the observer dynamics~\eqref{eq:obs} in~\eqref{eq:MHE_1}.
This direct coupling between the MHE formulation and the auxiliary observer allows for using the corresponding $\delta$-Lyapunov function $V_{\mathrm{o}}$ as $M$-step Lyapunov function for (suboptimal) MHE, cf.~Section~\ref{sec:MHE_stability} below.
As a consequence, the estimated states $\hat{x}_{j|t}$ are restricted to the set $\mathbb{Z}$ via constraint~\eqref{eq:MHE_3}, i.e., to the domain of the observer~\eqref{eq:obs}.
If additional knowledge on the real system state is available (e.g., due to physical nature) and should be incorporated into the MHE scheme to improve performance, one could suitably re-design the auxiliary observer as suggested in \cite{Astolfi2021} or use additional projections as in \cite[Sec.~VI]{Schiller2022b}.

We consider the following suboptimal estimator.

\begin{definition}[Suboptimal estimator] \label{def:estimator}
	Let $t \in \mathbb{I}_{\geq 0}$, $M\in \mathbb{I}_{\geq1}$, the past estimate $\hat{x}_{t-M_t}$, and the input/output sequence $(u_j,y_j)_{j={t-M_t}}^{t-1}$ of system~\eqref{eq:sys} be given and let $\tilde{x}_{t-M_t|t} \in \mathbb{Z}$ denote a feasible candidate solution to the MHE problem~\eqref{eq:MHE}.
	Then, the corresponding suboptimal solution of~\eqref{eq:MHE} is defined as any estimate~$\hat{x}_{t-M_t|t}\in\mathbb{Z}$ such that 1) the MHE constraints \eqref{eq:MHE_1}-\eqref{eq:MHE_3} and 2) the cost decrease condition
	\begin{equation}
	J_t(\hat{x}_{t-M_t|t}) \leq 
	J_t(\tilde{x}_{t-M_t|t}) \label{eq:cost_decrease}
	\end{equation}
	are satisfied. The (suboptimal) state estimate at time $t \in \mathbb{I}_{\geq 0}$ is defined as $\hat{x}_t = \hat{x}_{t|t}$.
\end{definition}

We consider the following choice of the candidate solution
\begin{equation}
\tilde{x}_{t-M_t|t} = \hat{x}_{t-M_t}. \label{eq:candidate}
\end{equation}
This candidate solution\footnote{
	Note that~\eqref{eq:candidate} does not restrict the warm start of the particular algorithm used to solve~\eqref{eq:MHE}; a practical choice is, e.g., $\hat{x}_{t-M_t|t-1}$, i.e., the most recent MHE estimate, compare also~\cite[Rem.~4]{Schiller2022b}.
} is much simpler in contrast to our recent result \cite{Schiller2022b}, in which the auxiliary observer needed to be re-initialized, re-simulated, and transformed into a trajectory of system~\eqref{eq:sys}.
In the next section, we derive practical conditions for robust stablity of suboptimal MHE, simply by exploiting the coupling between the MHE problem~\eqref{eq:MHE} and the RGES observer~\eqref{eq:obs} satisfying Assumption~\ref{ass:observer}.

\section[M-Step Lyapunov Function for Suboptimal MHE]{$M$-Step Lyapunov Function for Suboptimal~MHE}
\label{sec:MHE_stability}
In order to show that $V_{\mathrm{o}}$ is an $M$-step Lyapunov function for suboptimal MHE, we require the following auxiliary lemma that provides a bound on the cost of the candidate solution $\tilde{J}_t := J(\tilde{x}_{t-M_t|t})$.
\begin{lemma}\label{lem:candidate_cost}
	Let Assumptions~\ref{ass:observer} and \ref{ass:lipschitz} hold and $\hat{x}_0\in\mathbb{Z}$.
	Then, the cost function~\eqref{eq:MHE_objective} evaluated at the candidate solution~\eqref{eq:candidate} satisfies for all $t\in\mathbb{I}_{\geq0}$
	\begin{align}
	\tilde{J}_t
	\leq &\ M_t\eta^{M_t}V_{\mathrm{o}}(\hat{x}_{t-M_t},x_{t-M_t}) + M_t \sum_{j=1}^{M_t}\eta^{j-1} \|w_{t-j}\|_Q^2 \nonumber\\
	&\ + \left(\eta\frac{\lambda_{\min}(P_1)}{\lambda_{\min}(R)}+M_t\right)  \sum_{j=1}^{M_t}\eta^{j-1} \|v_{t-j}\|_R^2. \label{eq:lem_res}
	\end{align}
\end{lemma}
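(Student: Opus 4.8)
The plan is to exploit that the candidate \eqref{eq:candidate} initializes the observer recursion \eqref{eq:MHE_1} exactly at the previous estimate $\hat{x}_{t-M_t}$, so that the prior (arrival-cost) term in the objective vanishes and only the discounted output-error sum survives. First I would observe that with $\tilde{x}_{t-M_t|t}=\hat{x}_{t-M_t}$ the term $2\|\hat{x}_{t-M_t|t}-\hat{x}_{t-M_t}\|_W^2$ in \eqref{eq:MHE_objective} is zero, so that
\begin{equation*}
\tilde{J}_t=\frac{\lambda_{\min}(P_1)}{2L_{\mathrm{h}}^2\lambda_{\max}(G)}\sum_{j=1}^{M_t}\eta^{j}\|\tilde{y}_{t-j|t}-y_{t-j}\|_G^2,
\end{equation*}
where $\tilde{y}_{j|t}=h_{\mathrm{n}}(\tilde{x}_{j|t},u_j)$ is generated by the candidate trajectory $\tilde{x}_{j|t}$; note this trajectory remains in $\mathbb{Z}$ by the codomain of $g_j$ together with $\hat{x}_0\in\mathbb{Z}$, so that Assumption~\ref{ass:lipschitz} and constraint \eqref{eq:MHE_3} are applicable.

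Second, I would bound each output-error term by the $\delta$-Lyapunov function of the observer. Using $\|\cdot\|_G^2\le\lambda_{\max}(G)\|\cdot\|^2$, then Assumption~\ref{ass:lipschitz} with identical inputs and $\bar v=0\in\mathbb{V}$, then $(a+b)^2\le 2a^2+2b^2$, and finally the lower bound \eqref{eq:lyap_1}, I would establish that for each $j\in\mathbb{I}_{[1,M_t]}$
\begin{equation*}
\frac{\lambda_{\min}(P_1)}{2L_{\mathrm{h}}^2\lambda_{\max}(G)}\|\tilde{y}_{t-j|t}-y_{t-j}\|_G^2\le V_{\mathrm{o}}(\tilde{x}_{t-j|t},x_{t-j})+\lambda_{\min}(P_1)\|v_{t-j}\|^2 .
\end{equation*}
The constants in \eqref{eq:MHE_objective} are chosen precisely so that the factor $2$, $L_{\mathrm{h}}^2$, and $\lambda_{\max}(G)$ all cancel and only $\lambda_{\min}(P_1)$ remains.

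Third, I would propagate $V_{\mathrm{o}}(\tilde{x}_{t-j|t},x_{t-j})$ by iterating the descent \eqref{eq:lyap_2} along the pair formed by the candidate observer trajectory and the true trajectory. Since $\tilde{x}_{i+1|t}=g_i(\tilde{x}_{i|t},u_i,h(x_i,u_i,v_i))$ and $x_{i+1}=f(x_i,u_i,w_i)$, \eqref{eq:lyap_2} gives $V_{\mathrm{o}}(\tilde{x}_{i+1|t},x_{i+1})\le\eta V_{\mathrm{o}}(\tilde{x}_{i|t},x_i)+\|w_i\|_Q^2+\|v_i\|_R^2$, and unrolling this from $i=t-M_t$ yields
\begin{equation*}
V_{\mathrm{o}}(\tilde{x}_{t-j|t},x_{t-j})\le\eta^{M_t-j}V_{\mathrm{o}}(\hat{x}_{t-M_t},x_{t-M_t})+\sum_{m=j+1}^{M_t}\eta^{m-j-1}\big(\|w_{t-m}\|_Q^2+\|v_{t-m}\|_R^2\big).
\end{equation*}

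Finally, I would substitute this into $\sum_{j=1}^{M_t}\eta^{j}V_{\mathrm{o}}(\tilde{x}_{t-j|t},x_{t-j})$ and collect terms. The $V_{\mathrm{o}}$-contribution is constant in $j$ since $\eta^{j}\eta^{M_t-j}=\eta^{M_t}$, hence equals $M_t\eta^{M_t}V_{\mathrm{o}}(\hat{x}_{t-M_t},x_{t-M_t})$, while the disturbance contribution becomes a double sum whose order I would swap to obtain, for each $m$, the weight $(m-1)\eta^{m-1}$ on $\|w_{t-m}\|_Q^2+\|v_{t-m}\|_R^2$; the crude bound $m-1\le M_t$ then produces the prefactor $M_t$ appearing in \eqref{eq:lem_res}. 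The leftover term $\lambda_{\min}(P_1)\sum_{j=1}^{M_t}\eta^{j}\|v_{t-j}\|^2$ from the second step I would convert via $\|v\|^2\le\|v\|_R^2/\lambda_{\min}(R)$ and factor out one $\eta$, which contributes the additional $\eta\,\lambda_{\min}(P_1)/\lambda_{\min}(R)$ to the $v$-gain. I expect the bookkeeping of the discount exponents in this reindexing to be the main obstacle — in particular, confirming that swapping the summation order and applying $m-1\le M_t$ is exactly what yields the stated $M_t$ prefactors; the positivity $R\succ0$ implicitly required by the last conversion is signalled by the appearance of $\lambda_{\min}(R)$ in the denominator of \eqref{eq:lem_res}.
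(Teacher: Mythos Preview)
Your proposal is correct and follows essentially the same route as the paper's proof: vanish the prior term via \eqref{eq:candidate}, bound the output error through Assumption~\ref{ass:lipschitz} and \eqref{eq:lyap_1}, iterate the dissipation inequality \eqref{eq:lyap_2} backwards to $t-M_t$, and then sum with the crude bound $m-1\le M_t$ after swapping the order of the double sum. Your observation that $R\succ0$ is implicitly needed for the $\lambda_{\min}(R)$ in the denominator is accurate and is tacitly assumed by the paper as well.
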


\begin{proof}
	First, note that the candidate solution~\eqref{eq:candidate} represents a feasible choice, which follows from the definition of the observer~\eqref{eq:obs} and constraint~\eqref{eq:MHE_3} since $\hat{x}_0\in\mathbb{Z}$; the corresponding state and output sequences generated from \eqref{eq:MHE_1}-\eqref{eq:MHE_2} are denoted as $\{\tilde{x}_{j|t}\}_{j=t-M_t}^{t}$ and $\{\tilde{y}_{j|t}\}_{j=t-M_t}^{t-1}$, respectively.
	From the cost function~\eqref{eq:MHE_objective}, we have that
	\begin{align}
	\tilde{J}_t \leq \frac{\lambda_{\min}(P_1)}{2L_{\mathrm{h}}^2\lambda_{\max}(G)}\sum_{j=1}^{M_t}\eta^{j}\|\tilde{y}_{t-j|t}-y_{t-j}\|_{G}^2
	\label{eq:proof_lem_cost}
	\end{align}
	since $\tilde{x}_{t-M_t|t} =\hat{x}_{t-M_t}$ due to~\eqref{eq:candidate}.
	Using Assumption~\ref{ass:lipschitz} (and, for brevity, omitting indices in the following step) leads to
	\begin{align}
	&\ \|\tilde{y}-y\|_G^2 \leq \lambda_{\max}(G)\|h(\tilde{x},u,0)-h(x,u,v)\|^2 \nonumber \\
	\leq&\ \lambda_{\max}(G)2L_{\mathrm{h}}^2(\|\tilde{x}-x\|^2+\|v\|^2) \nonumber \\
	\leq&\ 2L_{\mathrm{h}}^2\left( \frac{\lambda_{\max}(G)}{\lambda_{\min}(P_1)}\|\tilde{x}-x\|_{P_1}^2 +  \frac{\lambda_{\max}(G)}{\lambda_{\min}(R)}\|v\|_R^2\right). \label{eq:proof_lem_xw}
	\end{align}	
	By combining~\eqref{eq:proof_lem_cost}, \eqref{eq:proof_lem_xw}, and~\eqref{eq:lyap_1}, we therefore obtain
	\begin{equation}
	\tilde{J}_t
	\leq \sum_{j=1}^{M_t}\eta^{j}\left( V_{\mathrm{o}}(\tilde{x}_{t-j|t},x_{t-j}) + \frac{\lambda_{\min}(P_1)}{\lambda_{\min}(R)}\|v_{t-j}\|_R^2 \right)\label{eq:proof_lem_VQ}
	\end{equation}
	and, since $\{\tilde{x}_{j|t}\}_{j=t-M_t}^{t}$ is a state trajectory of the observer~\eqref{eq:obs} via~\eqref{eq:MHE_3}, we can invoke Assumption~\ref{ass:observer} and apply the dissipation inequality~\eqref{eq:lyap_2} for each $j\in\mathbb{I}_{[1,M_t]}$ $(M_t-j)$ times. This leads to
	\begin{align*}
	&\ \eta^{j}V_{\mathrm{o}}(\tilde{x}_{t-j|t},x_{t-j}) \leq \eta^j\Bigg(\eta^{M_t-j}V_o(\tilde{x}_{t-M_t|t},x_{t-M_t}) \\
	&\ + \sum_{i=j+1}^{M_t}\eta^{i-j-1}\big(\|w_{t-i}\|_Q^2 + \|v_{t-i}\|_R^2\big)\Bigg)
	\end{align*}
	for each $j\in\mathbb{I}_{[1,M_t]}$.
	Summing up over all $j\in\mathbb{I}_{[1,M_t]}$ yields
	\begin{align}
	&\ \sum_{j=1}^{M_t}\eta^{j}V_{\mathrm{o}}(\tilde{x}_{t-j|t},x_{t-j}) \leq  M_t\eta^{M_t}V_{\mathrm{o}}(\tilde{x}_{t-M_t|},x_{t-M_t}) \nonumber \\
	&\qquad + M_t\sum_{j=1}^{M_t}\eta^{j-1}\left(\|w_{t-j}\|_Q^2+\|v_{t-j}\|_R^2\right). \label{eq:proof_lem_sum}
	\end{align}
	Combining \eqref{eq:proof_lem_VQ} and \eqref{eq:proof_lem_sum} and recalling that $\tilde{x}_{t-M_t|t} = \hat{x}_{t-M_t}$ by~\eqref{eq:candidate} leads to \eqref{eq:lem_res}, which hence concludes this proof.
\end{proof}

In the following, we show that $V_{\mathrm{o}}$ is an $M$-step Lyapunov function for suboptimal MHE.

\begin{theorem}\label{thm}
	Let Assumptions~\ref{ass:observer} and \ref{ass:lipschitz} hold and $\hat{x}_0\in\mathbb{Z}$.
	Then, the suboptimal state estimate $\hat{x}_{t}$ satisfies
	\begin{align}
	V_{\mathrm{o}}(\hat{x}_t,x_t) &\leq \gamma_1(M_t)V_{\mathrm{o}}(\hat{x}_{t-M_t},x_{t-M_t})\label{eq:thm_result}\\
	& \, + \sum_{j=1}^{M_t}\eta^{j-1}(\gamma_2(M_t)\|w_{t-j}\|_Q^2+\gamma_3(M_t)\|v_{t-j}\|_R^2)	\nonumber
	\end{align}
	for all $t\in\mathbb{I}_{\geq0}$, where
	\begin{subequations}\label{eq:gamma}
		\begin{align}
		&\overline{\gamma}_1(k,r,s){\,:=} 2\lambda_{\max}(P_2,P_1)\eta^{s}{+}\lambda_{\max}(P_2,W)k\eta^{r+s},\label{eq:gamma_1}\\
		&\overline{\gamma}_2(k,r){\,:=} 1+\lambda_{\max}(P_2,W)k\eta^{r},\label{eq:gamma_2}\\
		&\overline{\gamma}_3(k,r){\,:=} 1+\lambda_{\max}(P_2,W)\left(\eta\frac{\lambda_{\min}(P_1)}{\lambda_{\min}(R)}+k\right)\eta^{r},\label{eq:gamma_3}
		\end{align}
	\end{subequations}
	with\footnote{
		We define the functions $\overline{\gamma}_i$ in~\eqref{eq:gamma} as functions of three (two) separate variables, since this will be convenient for various extensions/adaptations discussed in Remark~\ref{rem:alternative_candidate} and Section~\ref{sec:num}.
	} $\gamma_1(r):= \overline{\gamma}_1(r,r,r)$, and $\gamma_i(r):= \overline{\gamma}_i(r,r)$, $i = \{2,3\}$.
\end{theorem}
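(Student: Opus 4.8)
The plan is to exploit the direct coupling between the MHE trajectory and the observer encoded in constraint~\eqref{eq:MHE_1}. Since the estimated sequence $\{\hat{x}_{j|t}\}$ is generated by $\hat{x}_{j+1|t}=g_j(\hat{x}_{j|t},u_j,y_j)$ using the \emph{true} (noisy) outputs $y_j=h(x_j,u_j,v_j)$, the dissipation inequality~\eqref{eq:lyap_2} applies verbatim at each step with $z=\hat{x}_{j|t}$, $x=x_j$, and the actual disturbances $(w_j,v_j)$ of system~\eqref{eq:sys}. First I would therefore iterate~\eqref{eq:lyap_2} from $j=t-M_t$ to $j=t-1$ to obtain the telescoping bound
\[
V_{\mathrm{o}}(\hat{x}_t,x_t)\leq \eta^{M_t}V_{\mathrm{o}}(\hat{x}_{t-M_t|t},x_{t-M_t})+\sum_{j=1}^{M_t}\eta^{j-1}\big(\|w_{t-j}\|_Q^2+\|v_{t-j}\|_R^2\big),
\]
recalling that $\hat{x}_t=\hat{x}_{t|t}$.

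The core step is to bound the remaining term $V_{\mathrm{o}}(\hat{x}_{t-M_t|t},x_{t-M_t})$, and this is where suboptimality enters. Using the upper bound in~\eqref{eq:lyap_1} together with the split $\hat{x}_{t-M_t|t}-x_{t-M_t}=(\hat{x}_{t-M_t|t}-\hat{x}_{t-M_t})+(\hat{x}_{t-M_t}-x_{t-M_t})$ and the elementary estimate $\|a+b\|_{P_2}^2\leq 2\|a\|_{P_2}^2+2\|b\|_{P_2}^2$, I would bound the second summand by $2\lambda_{\max}(P_2,P_1)V_{\mathrm{o}}(\hat{x}_{t-M_t},x_{t-M_t})$ (again via~\eqref{eq:lyap_1}). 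For the first summand I would observe that the arrival-cost term $2\|\hat{x}_{t-M_t|t}-\hat{x}_{t-M_t}\|_W^2$ is a lower bound for the full cost $J_t(\hat{x}_{t-M_t|t})$, which by the cost-decrease condition~\eqref{eq:cost_decrease} is in turn bounded by $\tilde{J}_t$; converting the $W$-weighted norm to the $P_2$-weighted norm with $\lambda_{\max}(P_2,W)$ then yields $2\|\hat{x}_{t-M_t|t}-\hat{x}_{t-M_t}\|_{P_2}^2\leq\lambda_{\max}(P_2,W)\tilde{J}_t$. Combining gives
\[
V_{\mathrm{o}}(\hat{x}_{t-M_t|t},x_{t-M_t})\leq \lambda_{\max}(P_2,W)\tilde{J}_t+2\lambda_{\max}(P_2,P_1)V_{\mathrm{o}}(\hat{x}_{t-M_t},x_{t-M_t}).
\]

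Finally I would insert the candidate-cost bound~\eqref{eq:lem_res} from Lemma~\ref{lem:candidate_cost} for $\tilde{J}_t$, substitute the resulting estimate into the telescoping bound, and collect terms by matching coefficients. The coefficient of $V_{\mathrm{o}}(\hat{x}_{t-M_t},x_{t-M_t})$ becomes $2\lambda_{\max}(P_2,P_1)\eta^{M_t}+\lambda_{\max}(P_2,W)M_t\eta^{2M_t}=\gamma_1(M_t)$, while the coefficients of $\sum_j\eta^{j-1}\|w_{t-j}\|_Q^2$ and $\sum_j\eta^{j-1}\|v_{t-j}\|_R^2$ reproduce $\gamma_2(M_t)$ and $\gamma_3(M_t)$, respectively; the base case $M_t=0$ holds trivially since $\lambda_{\max}(P_2,P_1)\geq1$. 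I expect the main difficulty to be bookkeeping rather than conceptual: one must keep the discount powers $\eta^{M_t}$, $\eta^{j}$, and $\eta^{j-1}$ aligned when multiplying the substituted bound by $\eta^{M_t}$ and merging it with the telescoping sum, and must verify that the normalizing factor $\lambda_{\min}(P_1)/(2L_{\mathrm{h}}^2\lambda_{\max}(G))$ already absorbed into $\tilde{J}_t$ causes the output-residual weighting to cancel cleanly so that only the stated $\gamma_i$ remain.
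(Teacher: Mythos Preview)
Your proposal is correct and follows essentially the same approach as the paper: iterate~\eqref{eq:lyap_2} along the MHE trajectory to obtain the telescoping bound, split $V_{\mathrm{o}}(\hat{x}_{t-M_t|t},x_{t-M_t})$ via $\|a+b\|_{P_2}^2\le 2\|a\|_{P_2}^2+2\|b\|_{P_2}^2$, bound the two pieces using $\lambda_{\max}(P_2,P_1)$ together with~\eqref{eq:lyap_1} and $\lambda_{\max}(P_2,W)$ together with the cost-decrease condition~\eqref{eq:cost_decrease}, and then insert Lemma~\ref{lem:candidate_cost}. The bookkeeping of the discount powers and the coefficient matching you outline is exactly what the paper does, so there is nothing to add.
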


\begin{proof}
	At any $t\in\mathbb{I}_{\geq0}$, constraint~\eqref{eq:MHE_1} ensures that the estimated suboptimal trajectory $\{\hat{x}_{j|t}\}_{j=t-M_t}^{t}$ is a trajectory of the observer~\eqref{eq:obs}, which by Assumption~\ref{ass:observer} admits the $\delta$-Lyapunov function $V_{\mathrm{o}}(\tilde{x},x)$.
	Hence, we can apply the dissipation inequality~\eqref{eq:lyap_2} for $M_t$ times, which leads to
	\begin{align}
	V_{\mathrm{o}}(\hat{x}_t,x_t)\leq&\ \eta^{M_t}V_{\mathrm{o}}(\hat{x}_{t-M_t|t},x_{t-M_t}) \label{eq:proof_thm_lyap} \\
	&\ + \sum_{j=1}^{M_t}\eta^{j-1}\big(\|w_{t-j}\|_Q^2+\|v_{t-j}\|_R^2\big). \nonumber
	\end{align}
	Using \eqref{eq:lyap_1} with Cauchy-Schwarz and Young’s inequality, we further have that
	\begin{align}
	&\ V_{\mathrm{o}}(\hat{x}_{t-M_t|t},x_{t-M_t}) \leq  \|\hat{x}_{t-M_t|t}-x_{t-M_t}\|_{P_2}^2 \nonumber\\
	\leq&\ 2\|\hat{x}_{t-M_t|t}-\hat{x}_{t-M_t}\|_{P_2}^2+2\|x_{t-M_t}-\hat{x}_{t-M_t}\|_{P_2}^2. \label{eq:proof_thm_triangle}
	\end{align}
	The second term of the right-hand side in \eqref{eq:proof_thm_triangle} can be bounded by exploiting \eqref{eq:lyap_1} according to
	\begin{equation}
		2\|\hat{x}_{t-M_t}-x_{t-M_t}\|_{P_2}^2
		\leq 2\lambda_{\max}(P_2,P_1)V_{\mathrm{o}}(\hat{x}_{t-M_t},x_{t-M_t}). \label{eq:proof_thm_term2}
	\end{equation}
	Using a similar reasoning, the first term of the right-hand side in \eqref{eq:proof_thm_triangle} satisfies
	\begin{align}
	2\|\hat{x}_{t-M_t|t}-\hat{x}_{t-M_t}\|_{P_2}^2
	\leq \lambda_{\max}(P_2,W)J_t(\hat{x}_{t-M_t|t}), \label{eq:proof_thm_term1}
	\end{align}
	which follows from the definition (and non-negativity) of the cost function~\eqref{eq:MHE_objective}.
	Now recall that  $J_t(\hat{x}_{t-M_t|t})\leq \tilde{J}_t$ due to~\eqref{eq:cost_decrease};
	consequently, we can invoke Lemma~\ref{lem:candidate_cost}, and thus, the combination of \eqref{eq:proof_thm_lyap} and \eqref{eq:proof_thm_triangle}-\eqref{eq:proof_thm_term1} leads to~\eqref{eq:thm_result}, which hence concludes this proof.
\end{proof}

Provided that $M\in\mathbb{I}_{\geq1}$ is chosen such that
\begin{equation}
\rho^M:= \gamma_1(M)<1 \label{eq:cond_M}
\end{equation}
holds, Theorem~\ref{thm} directly implies that
\begin{align}
V_{\mathrm{o}}(\hat{x}_t,x_t) &\leq \rho^MV_{\mathrm{o}}(\hat{x}_{t-M_t},x_{t-M_t})  \nonumber\\
&\ \, + \sum_{j=1}^{M}\eta^{j-1}\left(\gamma_2(M)\|w_{t-j}\|_Q^2 + \gamma_3(M)\|v_{t-j}\|_R^2\right) \nonumber
\end{align}
for $t\in\mathbb{I}_{\geq M}$. Consequently, $V_{\mathrm{o}}$ is an $M$-step Lyapunov function for suboptimal MHE, compare~\cite[Thm.~1]{Schiller2022} for standard MHE (without auxiliary observer). 
We can straightforwardly deduce RGES as shown in the following corollary.

\begin{corollary}
	Let Assumptions~\ref{ass:observer} and \ref{ass:lipschitz} hold, $\hat{x}_0\in\mathbb{Z}$,~and $M\in\mathbb{I}_{\geq0}$ satisfy \eqref{eq:cond_M}.
	Then, the suboptimal moving horizon estimator from Definition~\ref{def:estimator} is RGES.
\end{corollary}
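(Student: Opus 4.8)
The plan is to convert the $M$-step Lyapunov decrease established just above the corollary into the one-step RGES bound of Definition~\ref{def:RGES} by a standard backward iteration. First I would fix $t\in\mathbb{I}_{\geq0}$ and write $t = pM + s$ with the unique $p\in\mathbb{I}_{\geq0}$ and $s\in\mathbb{I}_{[0,M-1]}$. Along the subsequence $t,\,t-M,\,\dots,\,t-pM=s$ the value $V_{\mathrm{o}}(\hat{x}_\cdot,x_\cdot)$ obeys the contraction $V_{\mathrm{o}}(\hat{x}_\tau,x_\tau)\leq\rho^M V_{\mathrm{o}}(\hat{x}_{\tau-M},x_{\tau-M})+d_\tau$ with $d_\tau:=\sum_{j=1}^{M}\eta^{j-1}\big(\gamma_2(M)\|w_{\tau-j}\|_Q^2+\gamma_3(M)\|v_{\tau-j}\|_R^2\big)$, which holds for every $\tau\geq M$ by \eqref{eq:cond_M}. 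Each of the $p$ applications is legitimate since the smallest argument time is $t-(p-1)M=M+s\geq M$, and iterating yields
\begin{equation*}
V_{\mathrm{o}}(\hat{x}_t,x_t)\leq\rho^{pM}V_{\mathrm{o}}(\hat{x}_s,x_s)+\sum_{i=0}^{p-1}\rho^{iM}d_{t-iM}.
\end{equation*}

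Next I would bound the base term. Since $M_s=\min\{s,M\}=s$ for $s<M$, Theorem~\ref{thm} applies at time $s$ with $\hat{x}_{s-M_s}=\hat{x}_0$, giving $V_{\mathrm{o}}(\hat{x}_s,x_s)\leq\gamma_1(s)V_{\mathrm{o}}(\hat{x}_0,x_0)+\sum_{j=1}^{s}\eta^{j-1}\big(\gamma_2(s)\|w_{s-j}\|_Q^2+\gamma_3(s)\|v_{s-j}\|_R^2\big)$. Substituting, and using $\rho^{pM}=\rho^{t-s}\leq\rho^{-(M-1)}\rho^{t}$ together with the uniform bound $\max_{s\in\mathbb{I}_{[0,M-1]}}\gamma_1(s)<\infty$, I obtain $V_{\mathrm{o}}(\hat{x}_t,x_t)\leq C\rho^{t}V_{\mathrm{o}}(\hat{x}_0,x_0)+(\text{discounted disturbance sum})$ for a constant $C$ absorbing the transient factors $\rho^{-(M-1)}$ and $\gamma_1(s)$. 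For the disturbance part, observe that the outer factor $\rho^{iM}$ and the inner window weight $\eta^{j-1}$ acting on a noise at time $t-iM-j$ combine to at most $\bar{c}\,\bar\lambda^{\,iM+j-1}=\bar{c}\,\bar\lambda^{\,t-(t-iM-j)-1}$ with $\bar\lambda:=\max\{\eta,\rho\}\in[0,1)$; since the windows $\{t-iM-j : j\in\mathbb{I}_{[1,M]}\}$ over $i$ tile the past time axis without overlap, a reindexing collapses the double sum into a single exponentially discounted sum of the form $\sum_{j=1}^{t}\bar\lambda^{\,t-j}\big(\bar{c}_2\|w_{t-j}\|^2+\bar{c}_3\|v_{t-j}\|^2\big)$.

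Finally I would pass from $V_{\mathrm{o}}$ back to the estimation error through the quadratic sandwich \eqref{eq:lyap_1}: the lower bound gives $\|\hat{x}_t-x_t\|^2\leq V_{\mathrm{o}}(\hat{x}_t,x_t)/\lambda_{\min}(P_1)$ and the upper bound gives $V_{\mathrm{o}}(\hat{x}_0,x_0)\leq\lambda_{\max}(P_2)\|\hat{x}_0-x_0\|^2$. Taking square roots and using $\sqrt{a+b}\leq\sqrt{a}+\sqrt{b}$ (equivalently, invoking the max/sum equivalence of Remark~\ref{rem:sum_max}) produces the three-term maximum of \eqref{eq:RGES} with explicit $C_1,C_2,C_3>0$ and rates $\lambda_1,\lambda_2,\lambda_3\in[0,1)$ built from $\rho$ and $\bar\lambda$, which is precisely RGES.

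I expect the main obstacle to be the bookkeeping in the second step, namely verifying that the nested sum $\sum_i\rho^{iM}d_{t-iM}$, in which an inner $\eta$-discounted window of length $M$ is weighted by an outer geometric factor $(\rho^M)^i$, reduces cleanly to one exponentially decaying disturbance gain without the windows overlapping or the two discount rates clashing. Choosing $\bar\lambda=\max\{\eta,\rho\}$ and carefully tracking the index shift $t-iM-j$ (including the earliest contributions coming from the base term at time $s$) is where the care is needed; everything else follows routinely from Theorem~\ref{thm} and the quadratic bounds \eqref{eq:lyap_1}.
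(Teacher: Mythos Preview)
Your proposal is correct and is precisely the ``standard Lyapunov argument'' that the paper's one-sentence proof alludes to: iterate the $M$-step decrease \eqref{eq:thm_result} under \eqref{eq:cond_M}, absorb the initial segment of length $s<M$ via Theorem~\ref{thm}, collapse the nested geometric disturbance sum with a common rate, apply the quadratic bounds \eqref{eq:lyap_1}, and finish with Remark~\ref{rem:sum_max}. The paper omits all of this detail, so your write-up is simply the expanded version of the same route rather than a different approach.
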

\begin{proof}
	The proof is straightforward: applying standard Lyapunov arguments to~\eqref{eq:thm_result} under~\eqref{eq:cond_M} and exploiting Remark~\ref{rem:sum_max} yields the desired result.
\end{proof}

Some remarks are in order.

\begin{remark}[Condition on the horizon length]\label{rem:cond_M}
	By standard properties of the exponential function, one can easily verify specific properties of $\gamma_1:\mathbb{R}_{\geq0}\rightarrow\mathbb{R}_{\geq0}$ on the open interval $[0,\infty)$ --- namely, that $\gamma_1$ is continuous, has one (global) maximum, and $\lim_{M\rightarrow\infty}\gamma_1(M)=0$.
	Consequently, there exists some $\underline{M}\in\mathbb{I}_{\geq1}$ such that \eqref{eq:cond_M} holds for all $M\in\mathbb{I}_{\geq \underline{M}}$.
	In practice, a sufficiently large $\underline{M}$ can be easily obtained by solving~\eqref{eq:cond_M} numerically.
\end{remark}

\begin{remark}[Parameterization of the cost function]\label{rem:param}
	The matrices $W$ and $G$ in~\eqref{eq:MHE_objective} are arbitrary tuning parameters.
	The choice of $G$ has no impact on the theoretical guarantees (note that $G$ does not appear in \eqref{eq:thm_result}), since the stage cost is normalized by its largest eigenvalue $\lambda_{\max}(G)$. Consequently, $G$ can be used to scale the output estimates differently in case $p>1$.
	In contrast, $W$ has a direct influence on all functions $\gamma_1,\gamma_2,\gamma_3$~\eqref{eq:gamma} via the generalized eigenvalue $\lambda_{\max}(P_2,W)$.
	This can be exploited to adjust the degree of confidence in the observer's estimates by specifying how much the estimated trajectory $\{\hat{x}_{j|t}\}_{t-M}^t$ may ($\lambda_{\max}(P_2,W)\gg1$) or may not ($\lambda_{\max}(P_2,W)\ll1$) deviate from the observer trajectory initialized at $\hat{x}_{t-M_t}$.
	For small values of $\lambda_{\max}(P_2,W)$, the minimum horizon length is dominated by the first factor in \eqref{eq:gamma_1} and the functions $\gamma_2,\gamma_3$ in \eqref{eq:gamma_2}-\eqref{eq:gamma_3} become closer to that of the observer, cf.~\eqref{eq:lyap_2}.
	Conversely, the further one deviates from the stabilizing observer by choosing large values of $\lambda_{\max}(P_2,W)$ in~\eqref{eq:gamma}, the worse the guarantees become and the larger the horizons must be chosen; on the other hand, this choice typically leads to good results in practice, since the estimate from the auxiliary observer can (potentially significantly) be improved with only a few iterations, compare also the simulation example in Section~\ref{sec:num}.
\end{remark}

\begin{remark}[Asymptotic behavior for large $M$]\label{rem:M_large}
	Similar properties as discussed in Remark~\ref{rem:cond_M} for the function $\gamma_1$ also apply to $\gamma_2$ and $\gamma_3$. In particular, both these functions are monotonically decreasing in $M$ for $M$ large enough, and $\lim_{M\rightarrow\infty}\gamma_2(M)=\lim_{M\rightarrow\infty}\gamma_3(M) = 1$.
	Together with $\lim_{M\rightarrow\infty}\gamma_1(M) = 0$ (cf. Remark~\ref{rem:cond_M}), this implies the appealing theoretical feature that for $M\rightarrow\infty$, the bound from Theorem~\ref{thm} converges to the bound given by the $\delta$-Lyapunov function $V_{\mathrm{o}}$~\eqref{eq:lyap}, regardless of how the cost function~\eqref{eq:MHE_objective} is parameterized.
	This is generally not the case in \cite{Schiller2022b}, where the guarantees for suboptimal MHE are strictly worse than those from the auxiliary observer.
\end{remark}

\begin{remark}[Alternative Lyapunov function]
	The recent result \cite{Schiller2022} uses a $\delta$-IOSS Lyapunov function (which characterizes the detectability of the system) as $M$-step Lyapunov function for (standard) MHE.
	In contrast, we use the \mbox{$\delta$-Lyapunov} function $V_{\mathrm{o}}$ (which characterizes robust stability of the auxiliary observer) as $M$-step Lyapunov function for suboptimal MHE.
	A natural alternative could be to use the $\delta$-IOSS Lyapunov function also for suboptimal MHE, which becomes possible by combining the new Lyapunov approach from \cite{Schiller2022} with the suboptimal MHE scheme from \cite{Schiller2022b}.
	This results in the more standard MHE formulation where \eqref{eq:MHE_1}-\eqref{eq:MHE_3} is replaced according to the system dynamics~\eqref{eq:sys} and one additionally optimizes over a disturbance and noise sequence, cf.~\cite{Schiller2022}.
	However, due to the mismatch between the MHE constraints and the dynamics of the auxiliary observer~\eqref{eq:obs} used to construct the candidate solution, the overall guarantees that can be established for suboptimal MHE become more conservative in this case.
\end{remark}

\begin{remark}[Alternative candidate solution]\label{rem:alternative_candidate}
	For $M\in\mathbb{I}_{\geq1}$ arbitrarily fixed, we could also apply the re-initialization procedure that was suggested in~\cite{Schiller2022b} and derive a $T$-step Lyapunov function for a sufficiently large $T\in\mathbb{I}_{\geq M}$, thus ensuring robust stability of suboptimal MHE for an arbitrary horizon length $M$.
	However, the candidate solution becomes more intricate. In particular, at each $t\in\mathbb{I}_{\geq0}$, we need to re-initialize the auxiliary observer $T_t:=\min\{t,T\}$ steps in the past using $z_{t-T_t|t} = \hat{x}_{t-T_t}$ and perform a forward simulation for $T_t-M_t$ steps to obtain the candidate solution $\tilde{x}_{t-M_t|t} = z_{t-M_t|t}$; in addition, $\hat{x}_{t-M_t}$ needs to be replaced by $z_{t-M_t|t}$ in~\eqref{eq:MHE_objective}, compare~\cite{Schiller2022b}.
	Then, by suitably modifying the proofs of Lemma~\ref{lem:candidate_cost} and Theorem~\ref{thm}, we can derive~\eqref{eq:thm_result}-\eqref{eq:gamma} with functions $\overline{\gamma}_1(M_t,M_t,T_t)$, ${\gamma}_2(M_t)$, ${\gamma}_3(M_t)$, where $1$ is replaced by $2\lambda_{\max}(P_2,P_1)$ in \eqref{eq:gamma_2}-\eqref{eq:gamma_3}.
	Condition~\eqref{eq:cond_M} (with $\gamma_1(M)$ replaced by $\overline{\gamma}_1(M,M,T)$) can then be easily solved for a sufficient $T$, compare the example in Section~\ref{sec:num}.
\end{remark}

\section{Numerical Example}
\label{sec:num} 
We consider the following system 
\begin{subequations}
	\label{eq:num_sys}
	\begin{align}
	x_1^+ &= x_1 + t_{\Delta}(-2k_1x_1^2+2k_2x_2) + w_1,\\
	x_2^+ &= x_2 + t_{\Delta}(k_1x_1^2-k_2x_2) + w_2,\\
	y &= x_1 + x_2+ v,
	\end{align}
\end{subequations}
with $k_1=0.16$, $k_2=0.0064$, which corresponds to the discretized nonlinear chemical reactor process from~\cite[Sec.~5]{Tenny2002} using the sampling time $t_{\Delta}=0.1$.
This constitutes a standard benchmark example in the context of nonlinear MHE, and as is usually done in this setting, we choose $x_0=[3,1]^\top$ and the poor initial estimate $\hat{x}_0=[0.1,4.5]^\top$.
The disturbances $w$ and $v$ are modeled as uniformly distributed random variables sampled from $\mathbb{W} = \lbrace w \in \mathbb{R}^{2} : |w_i| \leq 2\cdot10^{-3}, i = \lbrace 1, 2 \rbrace \rbrace$ and $\mathbb{V} = \lbrace v \in \mathbb{R} : |v| \leq 10^{-2} \rbrace$ during the simulation.

For system~\eqref{eq:num_sys}, we design a Luenberger observer with $g(z,y)=f_{\mathrm{n}}(z)+L(h_{\mathrm{n}}(z)-y)$ in~\eqref{eq:obs}.
The constant observer gain $L\in\mathbb{R}^{n\times p}$ is computed based on the differential dynamics, where a sufficient linear matrix inequality (LMI) analogous to the dual (i.e., control) problem \cite{Manchester2018} can be derived. 
By imposing a quadratic Lyapunov function $V_{\mathrm{o}}(z,x)= \| z-x\|_P^2$, we can verify\footnote{
	LMIs were solved in Matlab using the toolbox YALMIP \cite{Loefberg2004} and the semidefinit programm solver MOSEK \cite{MOSEKApS2019}.
} Assumption~\ref{ass:observer} on $\mathbb{Z}=\{z\in\mathbb{R}^n : 0.1\leq z_1 \leq 6\}$  with gain $L=[7.999,-9.997]^\top$, $P = ${\footnotesize$\begin{bmatrix}  1.537 & 1.380 \\ 1.380 & 1.254\end{bmatrix}$}, $\eta=0.955$, $Q = 10^3\cdot I_2$, and $R=100$.

\subsubsection*{Proposed Suboptimal MHE}
We choose $G=1$ and $W=aP_2$ with $a>0$ in~\eqref{eq:MHE_objective}, which implies that $\lambda_{\max}(P_2,W)=1/a$; in the first simulation, we consider two different values of $a$ to illustrate Remark~\ref{rem:param}.
As a performance benchmark, we additionally consider the standard (i.e., fully optimized w.r.t. the system dynamics) MHE formulation from \cite{Schiller2022} and parameterize its cost function for the sake of comparability by verifying \cite[Cor.~2]{Schiller2022} with $V_{\mathrm{o}}$ as $\delta$-IOSS Lyapunov\footnote{
	Note that using a different (worse conditioned) matrix $P$ allows for choosing a smaller horizon length, compare \cite[Sec.~V]{Schiller2022}.
} function.
We implement each estimator in the filtering\footnote{
	The results derived in Section~\ref{sec:MHE_stability} (and \cite{Schiller2022}) for the prediction form of MHE (i.e., neglecting the current measurement $y_t$) can be easily extended to the filtering form of MHE (i.e., including $y_t$), albeit with a (significantly) more cumbersome notation, compare also \cite[Sec.~4.2]{Rawlings2017}; this yields \eqref{eq:thm_result}-\eqref{eq:gamma} with $\overline{\gamma}_1(M_t+1,M_t,M_t)$, $\overline{\gamma}_2(M_t+1,M_t)$, and $\overline{\gamma}_3(M_t+1,M_t)/\eta$. \label{footnote}
} form of MHE, since this is generally beneficial in practice.
The respective horizon lengths $\underline{M}$ ensuring condition~\eqref{eq:cond_M} are shown in the upper part of Table~\ref{tab:comparison_1}.
\begin{table}
	\begin{threeparttable}
		\caption{Performance of the proposed MHE scheme compared to \cite{Schiller2022}.}
		\label{tab:comparison_1}
		\setlength\tabcolsep{4pt} 
		\begin{tabular*}{\columnwidth}{@{\extracolsep{\fill}}c ccc ccc c}
			\toprule
			Setup
			& \multicolumn{3}{c}{$a = 10^2$,  $\underline{M}{=}16$}
			& \multicolumn{3}{c}{$a = 10^{-3}$, $\underline{M}{=}128$}
			& \cite{Schiller2022}, $\underline{M}{=}30$\\
			\cmidrule(lr){1-1}\cmidrule(lr){2-4}\cmidrule(lr){5-7}\cmidrule(lr){8-8}
			iter. $i$
			& 0 & 1 & $\ast$
			& 0 & 1 & $\ast$
			& $\ast$\\
			SSE
			& 42.87 & 42.86 & 42.85
			& 42.94 &  3.48 & 3.47
			& 0.67 \\
			\ $\tau_{\mathrm{max}}$ {\scriptsize[$\mathrm{ms}$]}
			& 3.70 & 4.72 & 4.86
			& 3.77 & 5.00 & 7.41
			& 14.59 \\\bottomrule
		\end{tabular*}
		\begin{tablenotes}
			\footnotesize
			\item Average values over 100 simulations of length $t_{\mathrm{sim}}=200$; asterisks represent fully converged optimization problems.
		\end{tablenotes}
	\end{threeparttable}
\end{table}
We simulate\footnote{
	The simulations were performed on a standard PC (Intel Core i7 with 2.6 GHz, 12 MB cache, and 16 GB RAM under Ubuntu Linux 20.04) in Matlab with CasADi \cite{Andersson2018} and the NLP solver IPOPT \cite{Waechter2005}.
} each estimator using $M=\underline{M}$, so that valid theoretical guarantees are obtained in each case.
Table~\ref{tab:comparison_1} shows the sum-of-squares error $\mathrm{SSE}=\sum_{t=0}^{t_{\mathrm{sim}}}\|\hat{x}_t-x_t\|^2$ and the maximum computation time $\tau_{\mathrm{max}}$ per sample for different numbers of iterations~$i$.
Here, we require small values of $a$ to improve the estimates from the Luenberger observer.
In line with Remark~\ref{rem:param}, this requires larger horizons in order to satisfy condition~\eqref{eq:cond_M}.
However, we find that already $i=1$ iteration is sufficient to significantly improve the estimates of the auxiliary observer ($i=0$), reducing the computational complexity (i.e., $\tau_{\mathrm{max}}$) compared to standard MHE by $\approx66\, \%$.

\subsubsection*{Comparison with the Literature}
We compare the proposed MHE framework to suboptimal MHE from \cite{Schiller2022b} and to the fast MHE schemes from \cite{Kuehl2011} and \cite{Wynn2014}.
For comparison reasons, we fix the horizon length to $M=3$, and thus, consider the modifications from Remark~\ref{rem:alternative_candidate}.
Motivated by the findings from Table~\ref{tab:comparison_1}, we choose $W=10^{-3}P_2$, and correspondingly, $T{\,=\,}\underline{T}{\,=\,}178$, which ensures satisfaction of~\eqref{eq:cond_M} with $\gamma_1$ replaced by $\overline{\gamma}_1(M+1,M,T)$, compare Remark~\ref{rem:alternative_candidate} and Footnote~\ref{footnote}.
For suboptimal MHE \cite{Schiller2022b}, we use the observer-based candidate solution with a quadratic cost function based on the $\delta$-IOSS Lyapunov function $W_\delta$, which yields $\underline{T}=302$, cf. \cite[Thm.~3]{Schiller2022b}; the cost function used in \cite{Kuehl2011} is parameterized analogously (the framework from~\cite{Wynn2014} does not provide tuning possibilities).
Since \cite{Kuehl2011} and \cite{Wynn2014} both rely on the generalized Gauß-Newton (GGN) algorithm, we implement the suboptimal MHE schemes in a similar fashion.
To this end, we must change the initial estimate to $\hat{x}=[2.3, 1.5]^\top$, since the GGN algorithm does not converge using the setup from above, illustrating its local nature, compare~\cite{Wynn2014}.
As additional benchmark, we again consider the standard MHE scheme from \cite{Schiller2022}, although the corresponding guarantees do not hold anymore since $M=3<\underline{M}=30$.

Figure \ref{fig:comparison} shows the estimation error in Lyapunov coordinates for all estimators under study, which reveals a slightly improved transient behavior of the proposed suboptimal MHE scheme compared to those from \cite{Schiller2022b,Kuehl2011,Wynn2014}, and the auxiliary observer.
\begin{figure}
	\vspace{1.8mm}
	\includegraphics{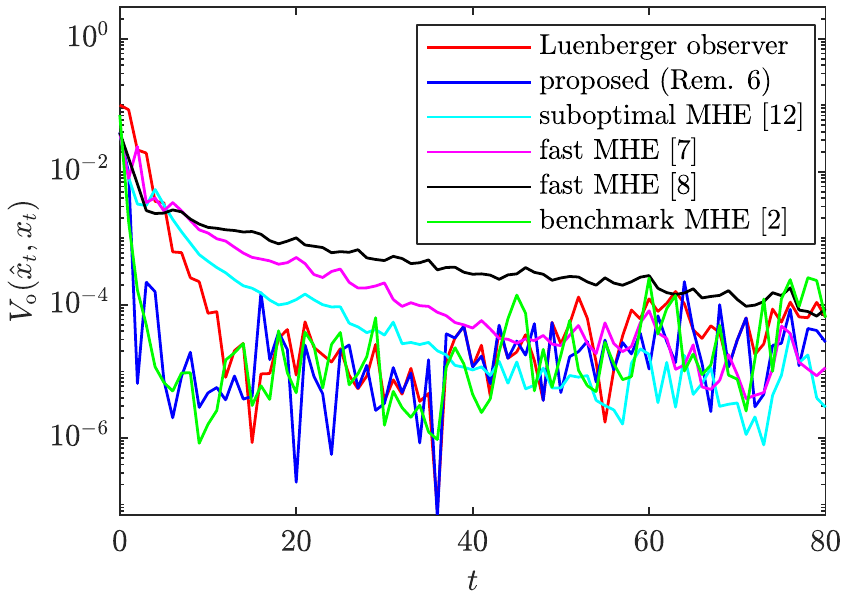}
	\caption{Estimation error in Lyapunov coordinates for the proposed suboptimal MHE scheme (with modifications from Remark~\ref{rem:alternative_candidate}) and comparable methods from the literature (suboptimal MHE~\cite{Schiller2022b}, fast MHE~\cite{Kuehl2011,Wynn2014}) after performing $i=1$ GGN iteration compared to the Luenberger observer and the (fully optimized) MHE \cite{Schiller2022} using a fixed estimation horizon $M=3$.}
	\label{fig:comparison}
\end{figure}
From the computation times shown in Table~\ref{tab:comparison_2}, we observe that the proposed scheme is slightly faster than suboptimal MHE~\cite{Schiller2022b}, which mainly is due to the fact that less decision variables are used in the optimization problem~\eqref{eq:MHE}.
\begin{table}
	\begin{threeparttable}
		\caption{Maximum computation time $\tau_{\mathrm{max}}$ per sample for the proposed suboptimal scheme compared to similar methods from the literature and \cite{Schiller2022}.}
		\label{tab:comparison_2}
		\setlength\tabcolsep{4pt} 
		\begin{tabular*}{\columnwidth}{@{\extracolsep{\fill}} cc cc cc}
			\toprule
			\ MHE scheme
			& proposed (Rem.~\ref{rem:alternative_candidate})
			& \cite{Schiller2022b}
			& \cite{Kuehl2011}
			& \cite{Wynn2014}
			& \cite{Schiller2022}\\
			$\tau_{\mathrm{max}}$ {\scriptsize[$\mathrm{ms}$]}
			& 2.39 & 2.83 & 1.08 & 0.90 & 14.31 \\
			\bottomrule
		\end{tabular*}
		\begin{tablenotes}
			\footnotesize
			\item Average values over 100 simulations of length $t_{\mathrm{sim}} = 400$.
		\end{tablenotes}
	\end{threeparttable}
\end{table}
Second, the proposed scheme (and suboptimal MHE~\cite{Schiller2022b}) is slower than the fast MHE schemes, since the forward simulation of the auxiliary observer becomes computationally more significant for large $T$, cf.~Remark~\ref{rem:alternative_candidate}.
Overall, we note that the proposed framework reduces the overall computation time per sample compared to standard MHE~\cite{Schiller2022} by ${\approx83}\, \%$.
In addition, the proposed framework is generally more flexible compared to \cite{Kuehl2011} and \cite{Wynn2014} (in particular, since arbitrary optimization algorithms can be used), and achieves good performance both in terms of accuracy and computation time under valid theoretical guarantees.

\section{Conclusion}
We presented a simple suboptimal MHE framework and provided practical sufficient conditions for guaranteed robust stability.
Given an auxiliary observer that admits a Lyapunov function, we showed that this function directly serves as $M$-step Lyapunov function for suboptimal MHE if $M$ is suitably chosen.
The derived guarantees are independent of the optimization algorithm, hold for an arbitrary number of solver iterations, improve as $M$ increases, and asymptotically approach those from the auxiliary observer, i.e., the theoretically best possible result.
The simulation example revealed that with only one iteration of the optimizer, we were able to achieve good performance in terms of both accuracy and computation time under valid theoretical guarantees.



\begin{thebibliography}{10}
\providecommand{\url}[1]{#1}
\csname url@samestyle\endcsname
\providecommand{\newblock}{\relax}
\providecommand{\bibinfo}[2]{#2}
\providecommand{\BIBentrySTDinterwordspacing}{\spaceskip=0pt\relax}
\providecommand{\BIBentryALTinterwordstretchfactor}{4}
\providecommand{\BIBentryALTinterwordspacing}{\spaceskip=\fontdimen2\font plus
\BIBentryALTinterwordstretchfactor\fontdimen3\font minus
  \fontdimen4\font\relax}
\providecommand{\BIBforeignlanguage}[2]{{%
\expandafter\ifx\csname l@#1\endcsname\relax
\typeout{** WARNING: IEEEtran.bst: No hyphenation pattern has been}%
\typeout{** loaded for the language `#1'. Using the pattern for}%
\typeout{** the default language instead.}%
\else
\language=\csname l@#1\endcsname
\fi
#2}}
\providecommand{\BIBdecl}{\relax}
\BIBdecl

\bibitem{Rawlings2017}
J.~B. Rawlings, D.~Q. Mayne, and M.~M. Diehl, \emph{Model Predictive Control:
  Theory, Computation, and Design}, 2nd~ed.\hskip 1em plus 0.5em minus
  0.4em\relax Santa Barbara, CA, USA: Nob Hill Publish., LLC, 2020, 3rd
  printing.

\bibitem{Schiller2022}
J.~D. Schiller, S.~Muntwiler, J.~Köhler, M.~N. Zeilinger, and M.~A. Müller,
  ``A {Lyapunov} function for robust stability of moving horizon estimation,''
  \emph{arXiv:2202.12744 [eess.SY]}, 2022.

\bibitem{Knuefer2021}
S.~Knüfer and M.~A. Müller, ``Nonlinear full information and moving horizon
  estimation: Robust global asymptotic stability,'' \emph{arXiv:2105.02764v2
  [eess.SY]}, 2021.

\bibitem{Allan2021a}
D.~A. Allan and J.~B. Rawlings, ``Robust stability of full information
  estimation,'' \emph{{SIAM} J. Control Optim.}, vol.~59, no.~5, pp.
  3472--3497, 2021.

\bibitem{Sui2010}
D.~Sui, T.~A. Johansen, and L.~Feng, ``Linear moving horizon estimation with
  pre-estimating observer,'' \emph{{IEEE} Trans. Autom. Control}, vol.~55,
  no.~10, pp. 2363--2368, 2010.

\bibitem{Suwantong2014}
R.~Suwantong, S.~Bertrand, D.~Dumur, and D.~Beauvois, ``Stability of a
  nonlinear moving horizon estimator with pre-estimation,'' in \emph{Proc. Am.
  Control Conf. ({ACC})}, 2014, pp. 5688--5693.

\bibitem{Kuehl2011}
P.~Kühl, M.~Diehl, T.~Kraus, J.~P. Schlöder, and H.~G. Bock, ``A real-time
  algorithm for moving horizon state and parameter estimation,'' \emph{Comp.
  {\&} Chem. Eng.}, vol.~35, no.~1, pp. 71--83, 2011.

\bibitem{Wynn2014}
A.~Wynn, M.~Vukov, and M.~Diehl, ``Convergence guarantees for moving horizon
  estimation based on the real-time iteration scheme,'' \emph{{IEEE} Trans.
  Autom. Control}, vol.~59, no.~8, pp. 2215--2221, 2014.

\bibitem{Alessandri2017}
A.~Alessandri and M.~Gaggero, ``Fast moving horizon state estimation for
  discrete-time systems using single and multi iteration descent methods,''
  \emph{{IEEE} Trans. Autom. Control}, vol.~62, no.~9, pp. 4499--4511, 2017.

\bibitem{Wan2017}
Y.~Wan and T.~Keviczky, ``Real-time nonlinear moving horizon observer with
  pre-estimation for aircraft sensor fault detection and estimation,''
  \emph{Int. J. Robust Nonlinear Control}, vol.~29, no.~16, pp. 5394--5411,
  2017.

\bibitem{Gharbi2021}
M.~Gharbi, B.~Gharesifard, and C.~Ebenbauer, ``Anytime proximity moving horizon
  estimation: Stability and regret for nonlinear systems,'' in \emph{Proc. 60th
  {IEEE} Conf. Decision Control ({CDC})}, 2021, pp. 728--735.

\bibitem{Schiller2022b}
J.~D. Schiller and M.~A. Müller, ``Suboptimal nonlinear moving horizon
  estimation,'' \emph{{IEEE} Trans. Autom. Control}, to be published, doi:
  10.1109/TAC.2022.3173937.

\bibitem{Knuefer2018}
S.~Knüfer and M.~A. Müller, ``Robust global exponential stability for moving
  horizon estimation,'' in \emph{Proc. {IEEE} Conf. Decision Control ({CDC})},
  2018, pp. 3477--3482.

\bibitem{Koehler2021}
J.~Köhler, M.~A. Müller, and F.~Allgöwer, ``Robust output feedback model
  predictive control using online estimation bounds,'' \emph{arXiv:2105.03427
  [eess.SY]}, 2021.

\bibitem{Sanfelice2016}
R.~G. Sanfelice and L.~Praly, ``Convergence of nonlinear observers on
  $\mathbb{R}^n$ with a {R}iemannian metric ({P}art {II}),'' \emph{{IEEE}
  Trans. Autom. Control}, vol.~61, no.~10, pp. 2848--2860, 2016.

\bibitem{Yi2021}
B.~Yi, R.~Wang, and I.~R. Manchester, ``Reduced-order nonlinear observers via
  contraction analysis and convex optimization,'' \emph{{IEEE} Trans. Autom.
  Control}, to be published, doi: 10.1109/TAC.2021.3115887.

\bibitem{Zhang2019}
W.~Zhang, Y.~Zhao, M.~Abbaszadeh, M.~Ji, and X.~Cai, ``Exponential observers
  for discrete-time nonlinear systems with incremental quadratic constraints,''
  in \emph{Proc. Am. Control Conf. ({ACC})}, 2019.

\bibitem{Zemouche2013}
A.~Zemouche and M.~Boutayeb, ``On {LMI} conditions to design observers for
  lipschitz nonlinear systems,'' \emph{Automatica}, vol.~49, no.~2, pp.
  585--591, 2013.

\bibitem{Reif1999}
K.~Reif and R.~Unbehauen, ``The extended kalman filter as an exponential
  observer for nonlinear systems,'' \emph{{IEEE} Trans. Signal Process.},
  vol.~47, no.~8, pp. 2324--2328, 1999.

\bibitem{Jaganath2005}
C.~Jaganath, A.~Ridley, and D.~Bernstein, ``A {SDRE}-based asymptotic observer
  for nonlinear discrete-time systems,'' in \emph{Proc. Am. Control Conf.},
  2005, pp. 3630--3635.

\bibitem{Astolfi2021}
D.~Astolfi, P.~Bernard, R.~Postoyan, and L.~Marconi, ``Constrained state
  estimation for nonlinear systems: a redesign approach based on convexity,''
  \emph{{IEEE} Trans. Autom. Control}, vol.~67, no.~2, pp. 824--839, 2021.

\bibitem{Tenny2002}
M.~J. Tenny and J.~B. Rawlings, ``Efficient moving horizon estimation and
  nonlinear model predictive control,'' in \emph{Proc. Am. Control Conf.},
  2002, pp. 4475--4480.

\bibitem{Manchester2018}
I.~R. Manchester and J.-J.~E. Slotine, ``Robust control contraction metrics: A
  convex approach to nonlinear state-feedback ${H}^\infty$ control,''
  \emph{{IEEE} Contr. Syst. Lett.}, vol.~2, no.~3, pp. 333--338, 2018.

\bibitem{Loefberg2004}
J.~Löfberg, ``Yalmip : A toolbox for modeling and optimization in matlab,'' in
  \emph{2004 {IEEE} Int. Conf. Robot. Autom.}, 2004, pp. 284--289.

\bibitem{MOSEKApS2019}
{MOSEK ApS}, \emph{The MOSEK optimization toolbox for MATLAB manual. Version
  9.0.}, 2019, [Online]. Available:
  http://docs.mosek.com/9.0/toolbox/index.html.

\bibitem{Andersson2018}
J.~A.~E. Andersson, J.~Gillis, G.~Horn, J.~B. Rawlings, and M.~Diehl,
  ``{CasADi}: a software framework for nonlinear optimization and optimal
  control,'' \emph{Math. Program. Comput.}, vol.~11, no.~1, pp. 1--36, 2018.

\bibitem{Waechter2005}
A.~Wächter and L.~T. Biegler, ``On the implementation of an interior-point
  filter line-search algorithm for large-scale nonlinear programming,''
  \emph{Math. Program.}, vol. 106, no.~1, pp. 25--57, 2005.

\end{thebibliography}
\end{document}